\newtheorem{mydef}{Definition}
\newtheorem{prop}{Proposition}
\newtheorem{example}{Example}
\newtheorem{lemma}{Lemma}
\begin{document}

\begin{center}\huge{Observing Actions in Bayesian Games}
\end{center}

\begin{center} \textit{Dominik Grafenhofer, Wolfgang Kuhle}\footnote{University of Economics, Prague, Czech Republic. Max Planck Institute for Social Law and Social Policy, Munich, Germany, E-mail wkuhle@gmx.de. Dominik Grafenhofer, E-mail econ@grafenhofer.at. Most of this paper was written during our time at the Max Planck Institute in Bonn, and we thank Martin Hellwig and Carl Christian von Weizsäcker for helpful and encouraging conversations. We also thank Philipp Koenig and Seminar participants in Bonn, Prague and Hangzhou for comments and questions.}
\end{center}

\noindent\emph{\textbf{Abstract:} We study Bayesian coordination games where agents receive noisy
private information over the game's payoff structure, and over
each others' actions. If private information over actions is
precise, we find that agents can coordinate on multiple
equilibria. If private information over actions is of low quality,
equilibrium uniqueness obtains like in a standard global games
setting. The current model, with its flexible information structure, can thus be used to study phenomena such as bank-runs, currency crises, recessions, riots, and revolutions, where agents rely on information over each others' actions.}\\
\textbf{Keywords: Coordination Games, Equilibrium Selection, Global Games}\\


\noindent\emph{ERNIE: There is something funny going on over there at the
bank, George, I've never really seen one, but that's got all the
earmarks of a run.}\\
\emph{PASSERBY: Hey, Ernie, if you have any money in the bank, you
better hurry.}\footnote{From the movie ``It's a wonderful life."}


\section{Introduction}

Coordination games are used extensively to model situations, such as bank runs, currency crises, or riots and revolutions. The distinguishing feature of such games is that agents have a strong incentive to choose mutually consistent strategies.

One way to study such games is to assume complete information,
i.e., players know the model's payoff relevant coefficients and
each others' equilibrium actions. This approach tends to
produce multiple, pure strategy, equilibria. These equilibria have
been criticized on the grounds that players often cannot observe
the game's payoff structure with perfect accuracy. \citet{Rub89},
\citet{Car93} and \citet{Mor98} argue that agents have to rely on
noisy private information over the game's payoff relevant
coefficients.\footnote{\citet{Har88} review the
earlier literature on strategic uncertainty and equilibrium
selection.} In such games, each agent has to use his own signal over the game's
payoffs to infer the signals, and thus the actions, of the other
agents. Such inference makes it difficult to coordinate on
multiple equilibria. In particular, whenever private information over
the game's payoffs is very precise, but not perfect, the global
games structure ensures equilibrium uniqueness.

To understand the equilibria that agents play, we argue that agent's information over the game's payoff coefficients \textit{and} information over each other's actions are of particular importance. The global games approach of \citet{Rub89}, \citet{Car93} and \citet{Mor98}, focuses only on one of these types of information, i.e. information over fundamentals, and assumes away the other. In the current paper, we build a model where agents simultaneously use information on the game's payoffs, as well as, information over each others' actions. 

Information over actions plays a dual role. First, actions depend on the game's payoff coefficient, and signals over actions therefore carry information over the games' payoff coefficients. In this interpretation, the signal over actions is just another signal over the game's fundamental. Taking this view, private information over actions should reinforce the global games equilibrium selection mechanism, where private information over fundamentals generates unique equilibria. The second function of signals over actions is, of course, that they inform players of each others' actions, which helps coordination. In equilibrium, we find that this second effect dominates, and multiple equilibria are ensured whenever private information over actions is sufficiently precise. 

In our model, runs feed on themselves: the size of the attack $A$ is an increasing function $\pi(A)$ of the attack itself. That is, increases in the mass of attacking agents are observed, and induce additional agents to join the run. In turn, this increase in attacking agents convinces even more bystanders to join the run. The strength of this infectious process increases with the private signal's precision, and agents can coordinate on multiple equilibria whenever the precision with which they observe each other is sufficiently high.

\emph{Interpretation:} Riots and revolutions, currency crises, bank-runs, recessions, or flights to quality, are social phenomena where agents observe, learn from, and emulate each others' actions. That is, in the context of a bank-run, depositors can observe the length of a queue,
respectively the lack thereof, in front of their local bank
branch.\footnote{Indeed, to avoid queues, which extend out to the
street, bank lobbies are traditionally rather large. That is, large lobbies reduce the precision with which agents can observe each other, which makes coordination harder. \citet{Dia83}, p. 408, stress the importance of queues in the context of bank runs. Their "sequential service constraint" formalizes that agents who withdraw early can front-run agents who come late. The current model emphasizes that such a delay in service is observable, and attracts agents who wouldn't withdraw otherwise.}
Similarly, during a currency crises, commercial banks observe their clients'
order flow, which helps them decide whether or not it is worth
while to join the run. Individual traders, who participate in the currency market, have conversations about their positions in the "Mexican Peso" or the "British Pound." Cartel members examine whether their
partners are undercutting the agreed on price. In the context of
business-cycles, where firms have an incentive to
produce whenever the other's are producing, firms closely monitor the level of "economic activity".\footnote{See the \citet{Dia82} type models for macro
settings, where agents' production choices are strategic complements.}
Finally, during riots and revolutions, the inhabitants of large
towns can see whether the number of protesters in the street is
large or small. In turn, "policy makers" enforce curfews, limit internet access, and shut down social media sites in an effort to conceal unrest.

\emph{Related literature:} \citet{Rub89}, \citet{Car93}, \citet{Mor98},
\citet{Fra04} show that global games select unique equilibria when private information
over the game's fundamental is sufficiently precise.
\citet{Mor01,Mor04}, \citet{Hel02} and \citet{Met02} emphasize that
public signals can restore multiplicity. \citet{Atk01} and \citet{Ang06} argue that stock prices aggregate and publicize private information in a manner that brings back multiplicity when private information over fundamentals is sufficiently precise. \citet{Hel02}, \citet{Mor07} and \citet{Mon89} show that the global games mechanism relies on signal structures that generate low common p-belief.

The current model links the global games theory to the herding literature, \citet{Sch90}, \citet{Bik92} and \citet{Ban92}, where agents' actions reveal information over unknown fundamentals. This channel is present in the current model. Private information over actions informs agents about the game's fundamental. Following the classic global games logic, such a signal should lower common p-believes, respectively, make it harder for agents to coordinate on multiple equilibria. While present, this channel is overcompensated by the fact that agents also learn about each others' strategies, which makes coordination easier.         

\citet{Ang06} and \citet{Das07} find that public signals
over actions, just like public signals over fundamentals, help agents to coordinate on multiple equilibria. The present model shows that private signals over actions,
unlike private signals over fundamentals, induce multiple
equilibria if they are sufficiently precise. That is, contrary to
the global games logic, where precise ``private information"
yields unique equilibria, we provide an example where private
signals ensure multiple equilibria.

The importance of information over actions has been emphasized in
a separate literature on ``conjectural equilibria." \citet{Bat97},
\citet{Min03}, \citet{Rub94}, and \citet{Esp13}, develop models where agents
receive noisy signals over each other's actions.\footnote{In a
similar gist, \citet{Hah77,Hah78} analyzes Walrasian economies,
where agents hold conjectures over each others' supply and demand
functions, which need not be true. \citet{Gue02} studies the relation between common knowledge, agents' conjectures, and the theory of rational expectations.} The current model thus brings
together arguments from the literature on conjectural equilibrium,
emphasizing noisy information over actions, with arguments from
the literature on global games, where uncertainty over actions
originates from parameter uncertainty.

Finally, we contribute to the effort aimed at
enriching the global games structure. \citet{Izm10},
\citet{Ste11}, \citet{Kuh15}, \citet{Gra16}, \citet{Ber16},
\citet{Bin01}, \citet{Pav07}, \citet{Mat12}, \citet{Fra12} add
heterogenous priors, multidimensional signal structures,
applications to mechanism design and dynamic information
revelation.

\emph{Organization:} Section \ref{model} outlines the model. Sections \ref{information_equilibrium} and \ref{generalerrorterms} contain the main results. In Section \ref{Tipping}, we reduce the dimensionality of our model, and Section \ref{Normal} illustrates our findings for normally distributed variables. Section \ref{Discussion} concludes.


\section{A coordination game with dominance regions} \label{model}

There is a status quo and a unit measure of agents indexed by
$i\in[0,1]$. Each of these agents $i$ can choose between two
actions $a_i\in\{0,1\}$. Choosing $a_i=1$ means to attack the
status quo. Choosing $a_i=0$ means that the agent abstains from
attacking the status quo. An attack on the status quo is
associated with a cost $c\in(0,1)$. If the attack is successful,
the status quo is abandoned, and attacking agents receive a net
payoff $1-c>0$. If the attack is not successful, an attacking
agent's net payoff is $-c$. The payoff for an agent who does not
attack is normalized to zero. The status quo is abandoned if the
aggregate size of the attack $A:=\int_0^1a_idi$ exceeds the
strength of the status quo $\theta$, i.e., if $A>\theta$.
Otherwise, if $A<\theta$, the status quo is maintained, and the
attack fails.

\begin{center}
\begin{tikzpicture}[xscale=1]
\draw [<->](0,0) -- (9,0);
\draw (3,-.2) -- (3, .2);
\draw (6,-.2) -- (6, .2);
\node [left] at (10,0) {$\theta$};
\node[align=center, above] at (3,.3){$0$};
\node[align=center, above] at (6,.3){$1$};
\node[align=center, below] at (1.5,-.3){Attack};
\node[align=center, below] at (4.5,-.3){Coordination};
\node[align=center, below] at (7.5,-.3){No attack};
\end{tikzpicture}\label{p12} 
\end{center}
\begin{center} Figure \ref{p12} \end{center}

Figure \ref{p12} illustrates that the fundamental $\theta$ can fall into three regions. First, if $\theta<0$, the status quo is abandoned, regardless of how many agents attack. If an agent knows that $\theta<0$, he should attack regardless of strategic considerations. Second, when $\theta\in(0,1)$, agents are in a coordination game, where the mass of attacking agents is crucial for the game's payoffs. Third, if $\theta>1$, the mass of attacking agents is always too small to overcome the status quo, and not attacking is the dominant strategy. To characterize the equilibria that agents play when they have incomplete information over the fundamental $\theta$, we start with an example. 

\section{Information and Equilibrium}\label{information_equilibrium}

Regarding agents' information we assume that players hold a uniform uninformative prior over the distribution of $\theta$. Moreover, we assume that each player $i$ receives two kinds of private information: a noisy signal $x_i$ over the strength of the status quo $\theta$, and another signal $y_i$ over the other players' actions $A$:
\begin{equation} \label{def_signals_x_and_y}
\begin{array}{lrcl}
\text{Signal over the fundamental: } \qquad\qquad & x_i &=& \theta + \epsilon^x_i \\
\text{Signal over the aggregate attack: } \qquad\qquad &  y_i &=& A + \epsilon^y_i
\end{array}
\end{equation}
The structure of the game, in particular the distribution of error terms in (\ref{def_signals_x_and_y}), is common knowledge. The exogenous fundamental $\theta$ and the endogenous size of the attack $A(\theta)$, however, are not common knowledge. Each agent uses his private signals $x_i,y_i$ to infer these two key variables.

Agents choose their strategies $a_i$ to maximize expected utility: 
\begin{eqnarray} E[U(a_i)|x_i,y_i]=a_i(P(\theta<A|x_i,y_i)-c). \end{eqnarray}
Action $a_i=1$ is thus optimal whenever $P(\theta<A|x_i,y_i)\geq c$. Agents are just indifferent between attacking and not attacking when signals $x_i$ and $y_i$ are such that $P^*:=c$. Finally, we denote the joint probability distribution function of the error terms by $f(\epsilon^x,\epsilon^y)$. Given agents' information, and the critical probability $P^*$, we can define:  
\begin{mydef}[Equilibrium] \label{fe}
An aggregate attack function $A(\theta)$ is an equilibrium of the game, if for all $\theta\in\mathbf{R}$ the following holds\footnote{$\chi$ denotes the indicator function.}:
\begin{equation}\label{eq_consistency}
A(\theta) = \int_{\mathbf{R}^2} \chi_{\{(\epsilon^x,\epsilon^y):P[A(\theta)\geq\theta|x_i=\theta+\epsilon^x,y_i=A(\theta)+\epsilon^y,A(\cdot)]\geq P^*\}} \, f(\epsilon^x,\epsilon^y) \, d\epsilon^x d\epsilon^y.
\end{equation}
\end{mydef}

To characterize equilibria, we start with an example where signal errors are bounded:
\begin{prop}\label{cs_mult}
Suppose that $\epsilon^y_i\in[-\sigma,\sigma]$, i.e. $f(\epsilon^x,\epsilon^y)=0$ for all $|\epsilon^y|>\sigma$ and all $\epsilon^x\in\mathbf{R}$. Further assume that the precision of the signal about the aggregate attack $y_i$ is precise enough, i.e. $0<\sigma<\frac12$. Then, there exists a continuum of equilibria.
\end{prop}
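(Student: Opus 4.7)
The plan is to exhibit a one-parameter continuum of equilibria by verifying that every switching profile
\[
A_{\theta^*}(\theta)=\chi_{\{\theta\leq\theta^*\}},\qquad \theta^*\in(0,1),
\]
satisfies the fixed-point condition (\ref{eq_consistency}). Distinct thresholds $\theta^*$ yield distinct aggregate-attack functions, so verifying even a subfamily already gives the asserted continuum.

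\textbf{Key observation.} Under any such candidate, $A$ takes only the values $0$ and $1$, so $y_i=A+\epsilon^y_i$ necessarily lies in $[-\sigma,\sigma]\cup[1-\sigma,1+\sigma]$. The hypothesis $\sigma<\tfrac12$ makes these two intervals disjoint. Since the map $A(\cdot)$ and the distribution $f$ are common knowledge, each agent can read off $A$ from $y_i$ alone with certainty, and via $A(\cdot)$ he also learns on which side of $\theta^*$ the fundamental lies.

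\textbf{Verification.} If $\theta\leq\theta^*$, then $A(\theta)=1$ and every player observes $y_i\in[1-\sigma,1+\sigma]$; he infers $A=1$, hence $\theta\leq\theta^*<1=A$, and concludes $P[A(\theta)\geq\theta\mid x_i,y_i,A(\cdot)]=1\geq P^*$, so he attacks. The integral in (\ref{eq_consistency}) therefore equals $1$, matching the candidate. If instead $\theta>\theta^*$, then $A(\theta)=0$ and each agent observes $y_i\in[-\sigma,\sigma]$; he infers $A=0$ and hence $\theta>\theta^*>0=A$, giving $P[A(\theta)\geq\theta\mid x_i,y_i,A(\cdot)]=0<P^*$, so nobody attacks and the integral equals $0$, again matching the candidate. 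The dominance regions require no separate treatment because $(-\infty,0)\subset(-\infty,\theta^*]$ and $(1,\infty)\subset(\theta^*,\infty)$, so the candidate is automatically consistent with attack being dominant below $0$ and no-attack being dominant above $1$.

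\textbf{Expected difficulty.} The only subtle step is the self-referential inference: the agent deduces $A(\theta)$ from $y_i$ by invoking common knowledge of $A(\cdot)$ together with disjointness of the two possible ranges of $y_i$. Once this observation is in place, the rest reduces to elementary interval containments, with $\sigma<\tfrac12$ entering precisely to guarantee disjointness. The knife-edge fundamental $\theta=\theta^*$ is a measure-zero event that can be absorbed into the convention chosen for the indicator in (\ref{eq_consistency}), and so does not interfere with the argument.
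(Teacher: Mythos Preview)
Your argument is correct and matches the paper's own proof almost verbatim: construct the one-parameter family of threshold attack functions, use $\sigma<\tfrac12$ to show the two possible ranges of $y_i$ are disjoint so that $y_i$ perfectly reveals which side of the threshold $\theta$ lies on, and conclude that the best response reproduces the candidate. The only cosmetic differences are your choice of $\le$ versus the paper's $<$ in the indicator and your explicit interval description versus the paper's use of the separating cutoff $\tfrac12$.
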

\begin{proof}
Pick an arbitrary $t\in[0,1]$ and define
\begin{equation}\label{Attheta}
A_t(\theta) := \left\{\begin{array}{ll}
1 \qquad\qquad & \theta < t \\
0 \qquad\qquad & \text{otherwise.}
\end{array} \right.
\end{equation}
\begin{center}
	\begin{tikzpicture}[xscale=1]
	\draw (0,0) -- (9,0);
	\draw [line width=0.5mm,color=red](0,2) -- (4.45,2);
	\draw [line width=0.5mm,color=red](4.5,2) circle (2pt);
	\draw [line width=0.5mm,color=red](4.55,0) -- (9,0);
	\filldraw [line width=0.5mm,color=red](4.5,0) circle (2pt);
	\draw [dotted, color=red](4.5,.1) -- (4.5, 1.9);
	\node [left] at (10,0) {$\theta$};
	\node[align=center, below] at (4.5,-.3){$t$};
	\node[align=center, left] at (-.3,0){$0$};
	\node[align=center, left] at (-.3,2){$1$};
	\node[align=center, left,color=red] at (7,1.3){$A_t(\theta)$};
	\end{tikzpicture}\label{Diagram2}
\end{center}
To establish that $A_t$ functions (\ref{Attheta}) are equilibria, we have to show that \eqref{eq_consistency} holds. First, we determine the distribution of signal realizations $y_i$ for a given $\theta$. There are two cases:
\begin{equation}\label{cutoff}
\begin{array}{ll}
\text{Case } \theta<t: & \qquad y_i = 1 + \epsilon^y_i > 1-\sigma \geq \frac12\\
\text{Case } \theta\geq t: & \qquad y_i = 0 + \epsilon^y_i < \sigma \leq \frac12.
\end{array}
\end{equation}
Notice, that signal realizations $y_i$ do not overlap: when $\theta<t$ all signal realizations are above $\frac12$. On the contrary, whenever $\theta>t$ all signal realizations are below $\frac12$.
\begin{center}
	\begin{tikzpicture}[xscale=1]
	\draw [<->](0,0) -- (9,0);
	\draw (4.5,-.4) -- (4.5, .2);
	\node [left] at (10,0) {$y_i$};
	\node[align=center, above] at (4.5,.3){$\frac12$};
	\draw[decoration={brace,mirror,raise=5pt},decorate]
	(0.1,0) -- node[below=6pt] {$\theta\geq t$} (4.5,0);
	\draw[decoration={brace,mirror,raise=5pt},decorate]
(4.5,0) -- node[below=6pt] {$\theta<t$} (8.9,0);
	\end{tikzpicture}\label{Diagram3}
\end{center}
Now suppose that an agents learns that his signal $y_i$ is larger than $\frac12$: then he knows that this is only possible when $\theta<t$. In turn, $\theta<t$ means that a successful attack is underway, which he should join. Indeed, using the conjecture $A_t(\theta)$, we know that all agents will attack.

If the agent's signal $y_i$ is smaller than $\frac12$, he knows $\theta<t$. That is, given the conjectured aggregate attack $A_t(\theta)$, the attack is not successful. Hence, it is optimal for the agent to abstain from attacking.

Equation \ref{eq_consistency} holds for all $\theta$: when $\theta<t$ all agents attack, and the aggregate attack is 1. In the other case, no agent attacks and the aggregate attack is 0. Thus, $A_t(\theta)$ satisfies the requirements of an equilibrium.
\end{proof}

Within the current class of equilibria (\ref{Attheta}), coordination is perfect: Diagram \ref{Diagram2} illustrates that we have either a successful attack $A_t=1$, in which all agents participate, or no attack $A=0$. This perfect coordination relies on the compact support of $y_i|A$, i.e, the error term. Large realizations of $A$ shift $y_i$ signals such that all agents are willing to attack. Small realizations of $A$ shift agents' observations $y_i$ such that all agents abstain from attacking. In Section \ref{generalerrorterms}, we relax the assumption of compact support and derive our results for games where the noise terms' support is unbounded. For these games, signals over actions are never fully revealing, but they are still sufficiently informative such that  runs feed on themselves, and multiple equilibria obtain. That is, increases in $A$ shift observations $y_i$ by enough such that a sufficient number of agents join a run. Except for the noise in agents' private signals, we maintain the assumptions from the previous section. 



\subsection{Unbounded Errors}\label{generalerrorterms}


When agents' signal errors have unbounded support, agents can never be sure entirely sure whether they are joining a successful run or a hopeless effort. Hence, contrary to Proposition \ref{cs_mult}, perfect coordination is impossible. That said, runs continue to feed on themselves in the same manner as before, and multiple equilibria obtain whenever private information over actions is sufficiently precise: 

\begin{prop}\label{infogap_mult_general}
Assume that $\epsilon^x_i$ and $\epsilon^y_i$ are distributed according to pdf $f=f_x f_y$ (cdfs $F_x$, $F_y$), where $f_x$ and $f_y$ are symmetric.\footnote{The symmetry assumption shortens the proof, and can otherwise be discarded.} There exist $\delta> 0, \gamma> 0,$ and $\xi > 0$ such that $1-\delta \geq \gamma$, $1>3\delta + 2\gamma$ and the following conditions hold:
\begin{align}
&\frac{F_x(\xi)}{1-F_x(\xi)} \, \frac{\sup_{a\in[0,\delta]}f_y(\eta-a)}{\inf_{a\in [1-\delta,1]}f_y(\eta-a)} \leq \frac{1-c}{c} \quad \text{ for all }\eta\geq 1-\delta-\gamma \label{cond1_2d}\\
&F_x[\xi]F_y[\gamma]\geq 1- \delta \label{cond2_2d} 
\end{align}
Whenever these conditions hold, there exists a continuum of equilibria.
\end{prop}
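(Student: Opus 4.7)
Following the logic of Proposition \ref{cs_mult}, the strategy is to parameterize candidate equilibria by a cutoff $t$. Because perfect coordination is impossible under unbounded noise, I will replace the exact step function $A_t(\theta)=\chi_{\theta<t}$ by an \emph{approximate} step function satisfying the near-step property $A_t(\theta)\geq 1-\delta$ on a ``below-$t$'' region and $A_t(\theta)\leq \delta$ on an ``above-$t$'' region, separated by a narrow transition zone whose width is controlled by $\gamma$. Distinct $t$'s give distinct attack functions, so exhibiting such an $A_t$ for every $t$ in a non-degenerate interval already yields a continuum of equilibria.

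The construction of $A_t$ proceeds through a fixed-point argument on the operator $T:A\mapsto \tilde A$, where $\tilde A(\theta)$ is the realized aggregate attack when agents best-respond to the conjecture $A$. With a flat prior on $\theta$, the agent's posterior density on $\theta$ given $(x_i,y_i)$ is proportional to $f_x(x_i-\theta)f_y(y_i-A(\theta))$. I would split this integral into the below-$t$ region (where $A\in[1-\delta,1]$), the above-$t$ region (where $A\in[0,\delta]$), and the transition zone (negligible thanks to $1>3\delta+2\gamma$). Condition (\ref{cond1_2d}), with its $\sup$ and $\inf$ over $a$, then gives a likelihood-ratio bound that is uniform in the conjecture's fine details, and yields $P[\theta<A\mid x_i,y_i]\geq c$ for every signal pair lying in a rectangle $\{x_i\leq t+\xi\}\cap\{y_i\geq 1-\delta-\gamma\}$. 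Hence all such agents attack under the best response.

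To close the loop I compute $\tilde A(\theta)$. For $\theta$ in the below-$t$ region the true aggregate is $A(\theta)\geq 1-\delta$, so by independence and symmetry of $f_x,f_y$ the probability that $(\epsilon_i^x,\epsilon_i^y)$ places the agent's signals inside the attack rectangle is at least $F_x(\xi)F_y(\gamma)$, which by (\ref{cond2_2d}) is $\geq 1-\delta$; hence $\tilde A(\theta)\geq 1-\delta$. The symmetric argument, combined with the fact that outside the attack rectangle the likelihood-ratio inequality reverses, delivers $\tilde A(\theta)\leq \delta$ on the above-$t$ region. Thus $T$ preserves the class of near-step functions for each fixed $t$, and a standard Schauder-type compactness argument provides the fixed point $A_t$.

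The delicate step is the transition zone, where the conjectured $A_t$ is essentially unconstrained and where a naive posterior calculation could swing the attack decision either way. The robustness built into (\ref{cond1_2d}) via $\sup_{a\in[0,\delta]}$ and $\inf_{a\in[1-\delta,1]}$, together with the slack inequalities $1-\delta\geq\gamma$ and $1>3\delta+2\gamma$, is precisely what dominates that contribution uniformly. Carefully assembling this uniform domination---so that the self-consistent counting $\tilde A\geq 1-\delta$ vs.\ $\tilde A\leq \delta$ survives the gap---is the technical core; once in place, the parameterization by $t$ yields the continuum for free, exactly as in the bounded-noise case.
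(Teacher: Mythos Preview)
Your overall strategy---parametrize by $t$, show the best-response operator $T$ preserves a ``near-step'' class, extract a fixed point by compactness---matches the paper's, and the attack rectangle $\{x_i\le t+\xi\}\cap\{y_i\ge 1-\delta-\gamma\}$ together with the roles you assign to conditions (\ref{cond1_2d}) and (\ref{cond2_2d}) are exactly what the paper uses.

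Where you diverge is the transition zone, and this is a genuine misconception rather than a harmless variant. The paper has \emph{no} transition zone: its invariant class is simply $\{A:\ A(\theta)\in[1-\delta,1]\text{ for }\theta<t,\ A(\theta)\in[0,\delta]\text{ for }\theta\ge t\}$, with a jump discontinuity at $t$. The posterior integral then splits into exactly two pieces, and the $\sup/\inf$ in (\ref{cond1_2d}) controls the likelihood ratio cleanly with no leftover term. Your claim that the transition zone is ``negligible thanks to $1>3\delta+2\gamma$'' is incorrect---that inequality is used only to guarantee that the interval $[\delta+\gamma,\,1-\delta-\gamma]$ of admissible $t$'s is non-degenerate; it says nothing about posterior mass. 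Under the stated hypotheses a zone of positive width in $\theta$ where $A$ is unconstrained could carry arbitrary posterior weight, so what you flag as ``the technical core'' is a self-inflicted difficulty that the paper avoids by allowing the jump.

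On the fixed-point step the paper is also slightly different: it iterates from $A^0_t=(1-\delta)\chi_{\{\theta<t\}}+\delta\,\chi_{\{\theta\ge t\}}$, shows by induction (via the helper lemma) that every iterate remains in the invariant class, and then extracts a convergent subsequence by Arzel\`a--Ascoli applied \emph{separately} on intervals $[-k,t]$ and $[t,k+1]$ to accommodate the jump; continuity of $T$ gives the limit as an equilibrium. A Schauder argument on the correct (discontinuous) invariant class would also work, but you would still need to handle the jump at $t$ in the same piecewise fashion.
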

\begin{proof} See Appendix \ref{proof_infogap_mult_general}.\end{proof}
The proof of Proposition \ref{infogap_mult_general} relies on an iteration argument: we start with a guess $A^0_t(\theta)$ and compute a best response $A^1_t(\theta)$. In turn, $A^1_t(\theta)$ yields another best response $A^2_t(\theta)$ and so on... . In a second step, we show that this iteration allows to construct a converging sequence of aggregate attacks/best responses. In the limit, we obtain an equilibrium attack function for every $t$. Finally, since there is a continuum of permissible $t$ values, we have a continuum of equilibria. 

To interpret conditions (\ref{cond1_2d}) and (\ref{cond2_2d}), we use an 

\begin{example}\label{example}
Suppose error terms $\epsilon^x,\epsilon^y$ are normally distributed, and denote the precisions of the respective signals by $\alpha_x=\frac{1}{\sigma^2_x}$ and $\alpha_y=\frac{1}{\sigma^2_y}$. For that case, we have $f_x(\xi)=\phi(\sqrt{\alpha_x}\xi)$ and $F_x(\xi)=\Phi(\sqrt{\alpha_x}\xi)$ and $f_y(\gamma)=\phi(\sqrt{\alpha_y}\gamma)$ and $F_y(\gamma)=\Phi(\sqrt{\alpha_y}\gamma)$, where $\phi$ and $\Phi$ represent the density and cumulative density functions of the standard normal distribution. In turn, we can choose $\delta$ and $\gamma$ such that $1>3\delta+2\gamma$, e.g., $\delta=.2$ $\gamma=.1$. Moreover, we can choose $\alpha_x\xi>\Phi^{-1}(.8)$. Finally we let $\alpha_y\rightarrow\infty$, such that (\ref{cond1_2d}) and (\ref{cond2_2d}) are both satisfied.\end{example}

Example \ref{example} allows for an interpretation of our findings: multiple equilibria are ensured only when the precision with which agents can observe actions is large. To illustrate the mechanisms behind the multiplicity result in Proposition \ref{infogap_mult_general} more clearly, we now focus on an alternative, one-dimensional, signal structure.

\section{Information over the Tipping Point}\label{Tipping}

Instead of observing fundamentals and actions separately, we now think of a model where agents simply observe one signal that informs them whether the economy is far away or close to the "tipping point." That is, agents observe a signal over $A-\theta$, which informs them whether the status quo will be abandoned or not. In the context of a currency crisis, such a variable can be interpreted as net outflows from a specific currency, respectively, the change in foreign currency reserves of the central bank. In the context of riots and revolutions, agents observe the degree to which protesters and outnumber the police.
We proceed in two steps. First, we study general distribution functions, and distinguish between bounded and unbounded error terms. Compared to the results from Sections \ref{information_equilibrium}, we find that "precise private information" always ensures multiple equilibria. Second, we add an example, where errors are normally distributed, as an illustration. This example makes transparent how multiple run equilibria obtain in situations, where agents observe each others actions.  

Suppose that agents receive only one signal:
\begin{equation}\label{infogap}
z_i = A-\theta + \rho_i,
\end{equation}
which informs about the likelihood that an attack succeeds. 

\begin{prop}\label{infogap_mult}
	Suppose that $\rho_i\in[-\sigma,\sigma]$, and that the precision of the signal about the aggregate attack $y_i$ is precise enough, i.e. $0<\sigma<\frac12$ Then, there exist a continuum of equilibria.
\end{prop}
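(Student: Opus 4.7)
The plan is to adapt the construction used for Proposition \ref{cs_mult}. For each $t\in[0,1]$, I will conjecture the step function $A_t(\theta)=\chi_{\{\theta<t\}}$ exactly as in \eqref{Attheta}, and check that it is self-consistent by exploiting the compact support $[-\sigma,\sigma]$ of $\rho_i$ together with the separation enforced by $\sigma<\tfrac12$.

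The first step is to compute the range of $z_i$ conditional on $\theta$ under the conjecture. When $\theta<t$ we have $A=1$, so $z_i = 1-\theta+\rho_i$, and since $\theta<t$ and $\rho_i\geq-\sigma$ this forces $z_i>1-t-\sigma$. When $\theta\geq t$ we have $A=0$, so $z_i=-\theta+\rho_i$, and since $\theta\geq t$ and $\rho_i\leq\sigma$ this forces $z_i\leq -t+\sigma$. Because $\sigma<\tfrac12$, the two resulting intervals are strictly separated, $-t+\sigma<\tfrac12-t<1-t-\sigma$, so any level $\bar z(t)$ in the gap, e.g.\ $\bar z(t)=\tfrac12-t$, splits signal realizations perfectly according to whether the attack succeeds.

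The second step is to read off optimal play. An agent observing $z_i>\bar z(t)$ deduces with certainty that $\theta<t$, so under the conjecture $A_t(\theta)=1>\theta$; the posterior probability of a successful attack is one, which exceeds $P^*=c$, so attacking is strictly optimal. An agent observing $z_i\leq\bar z(t)$ deduces with certainty that $\theta\geq t$, so $A_t(\theta)=0\leq\theta$; the posterior is zero, which is below $c$, so abstaining is strictly optimal.

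Finally, I aggregate these individual best responses. For $\theta<t$ every agent's signal exceeds $\bar z(t)$, so all agents attack and the induced aggregate is $1=A_t(\theta)$; for $\theta\geq t$ no agent's signal exceeds $\bar z(t)$, so the induced aggregate is $0=A_t(\theta)$. The consistency condition in the spirit of \eqref{eq_consistency} therefore holds for every $\theta\in\mathbf{R}$ and every $t\in[0,1]$, yielding a continuum of equilibria. There is no genuine obstacle here beyond bookkeeping: the entire argument rests on the same disjoint-support inequality $2\sigma<1$ that drove Proposition \ref{cs_mult}, with the one-dimensional signal $z_i$ simply playing the role that $y_i$ played there, and the fact that $z_i$ already bundles information about $A$ and $\theta$ is what lets a single threshold $\bar z(t)$ do the separation.
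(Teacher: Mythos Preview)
Your proof is correct and follows essentially the same approach as the paper's sketch: conjecture the step function $A_t$, show that the compact support $[-\sigma,\sigma]$ together with $\sigma<\tfrac12$ separates signal realizations into two disjoint ranges according to whether $\theta<t$ or $\theta\geq t$, and verify consistency. The only cosmetic difference is that the paper restricts to $t\in(\sigma,1-\sigma)$ and uses the fixed cutoff $0$, whereas you allow $t\in[0,1]$ and use the $t$-dependent cutoff $\bar z(t)=\tfrac12-t$; your choice is slightly sharper but the underlying argument is identical.
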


\begin{proof}[Sketch of proof]
The proof is parallel to that of Proposition \ref{cs_mult}: pick an arbitrary $t\in(\sigma,1-\sigma)$ and define $A_t(\theta)$. The signal realizations for different $\theta$ are:
\begin{equation}\label{infogap_cutoff}
\begin{array}{ll}
\text{Case } \theta<t: & \qquad z_i = 1-\theta + \rho_i > \sigma + \rho_i \geq 0 \\
\text{Case } \theta\geq t: & \qquad z_i = -\theta + \rho_i \leq -\sigma + \rho_i \leq 0.
\end{array}
\end{equation}
Again, there is no overlap in signal realizations in both cases, and the remainder of the argument is parallel to the proof of Proposition \ref{cs_mult}.
\end{proof}

\subsection{Unbounded Errors}

We denote by $G$ the cumulative distribution function of the error $\rho_i$, and by $g=G'$ the respective probability distribution function\footnote{We assume that $G$ is differentiable, i.e., we rule out atoms}. Again, we simply our proof and assume that $g$ is supposed to be symmetric around $0$. Regarding $\theta$, we assume that agents hold an uninformative uniform prior.

\begin{prop}\label{infogap_mult_1d}
Suppose that there exist $\delta>0$ and $\gamma>0$ such that the following conditions hold:
\begin{align}
&\frac{1-G(\xi-\alpha)}{G(\xi-\beta)} \geq \frac{1-c}{c}, \quad\text{for all }\xi\geq 1-\delta-\gamma,\, \alpha\in[0,\delta],\text{ and } \beta\in[1-\delta,1] \label{cond1_infogap_mult_1d}\\
& G(\gamma) \geq 1 - \delta \text{, and} \label{cond2_infogap_mult_1d}\\
& g(\delta-\gamma) < 1\label{cond3_infogap_mult_1d}
\end{align}
Then, there exists a continuum of equilibria.
\end{prop}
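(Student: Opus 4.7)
The plan is to mimic the iteration argument used for Proposition~\ref{infogap_mult_general}. For every $t$ in an appropriate non-degenerate sub-interval of $(0,1)$, I will construct an equilibrium aggregate attack $A_t(\theta)$ as the limit of a sequence $A^n_t(\theta)$ of iterated best responses, starting from the step function $A^0_t(\theta)=\chi_{\{\theta<t\}}$ used in the proof of Proposition~\ref{infogap_mult}. Because $t$ may be chosen freely in that interval, the construction will deliver a continuum of equilibria.

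The analytic core is the following best-response formula. Under the uninformative uniform prior on $\theta$, the posterior density of $\theta$ given a signal $z_i$ and a conjectured attack function $A(\cdot)$ is proportional to $g(z_i+\theta-A(\theta))$. The event ``attack succeeds'' is $\{\theta<A(\theta)\}$, so the agent attacks whenever
\begin{equation*}
\frac{\int_{\{\theta<A(\theta)\}} g\bigl(z_i+\theta-A(\theta)\bigr)\,d\theta}{\int_{\{\theta\ge A(\theta)\}} g\bigl(z_i+\theta-A(\theta)\bigr)\,d\theta}\;\ge\;\frac{c}{1-c}.
\end{equation*}
Evaluated at a step-type conjecture, symmetry of $g$ reduces this ratio to one of the form $G(\,\cdot\,)/(1-G(\,\cdot\,))$, and (\ref{cond1_infogap_mult_1d}) is exactly the likelihood-ratio margin needed to guarantee that, along the whole iteration, the sign of the posterior comparison with $c$ is determined by whether the signal lies above or below a cutoff. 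I then define an invariant class $\mathcal{A}_t$ of ``near-step'' aggregate attacks, namely monotone decreasing $A:\mathbf{R}\to[0,1]$ with $A(\theta)\ge 1-\delta$ for $\theta\le t-\gamma$ and $A(\theta)\le\delta$ for $\theta\ge t+\gamma$. Condition (\ref{cond2_infogap_mult_1d}), $G(\gamma)\ge 1-\delta$, bounds the tail probability of $\rho_i$ outside $[-\gamma,\gamma]$ and, combined with (\ref{cond1_infogap_mult_1d}), yields that the best-response operator $T$ maps $\mathcal{A}_t$ into itself.

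The remaining task is convergence of $A^n_t=T^n A^0_t$ inside $\mathcal{A}_t$, which is where (\ref{cond3_infogap_mult_1d}) enters. The best response has the form $A\mapsto 1-G(z^*(A)-A(\theta)+\theta)$ for an endogenous signal threshold $z^*(A)$, so its pointwise Lipschitz constant in $A(\theta)$ is bounded by $g$ evaluated inside the transition zone whose width is at most $\delta-\gamma$ for functions in $\mathcal{A}_t$; the hypothesis $g(\delta-\gamma)<1$ therefore delivers a strict pointwise contraction, the iteration converges, and its limit is a fixed point of $T$, hence an equilibrium in the sense of Definition~\ref{fe}. The main obstacle, as in Proposition~\ref{infogap_mult_general}, is not existence of fixed points \emph{per se} but the simultaneous control of three distinct sources of error — the noise tail, the posterior likelihood ratio, and the slope of the updating map — which is precisely what each of (\ref{cond2_infogap_mult_1d}), (\ref{cond1_infogap_mult_1d}) and (\ref{cond3_infogap_mult_1d}) is tuned to handle; the assumed symmetry of $g$ plays the auxiliary role of making these three estimates compatible on both sides of the cutoff.
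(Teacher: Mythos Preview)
Your overall architecture --- fix $t$, define an invariant ``near-step'' class, iterate best responses, and argue the limit is an equilibrium --- matches the paper's. The two proofs diverge, however, at the convergence step, and yours has a genuine gap there.

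The paper's invariant class is sharper than yours: it requires $A(\theta)\in[1-\delta,1]$ for all $\theta<t$ and $A(\theta)\in[0,\delta]$ for all $\theta\ge t$, i.e.\ the jump sits exactly at $t$, not in a window $[t-\gamma,t+\gamma]$. Lemma~\ref{helper_lemma_1d} establishes invariance of this sharper class directly from (\ref{cond1_infogap_mult_1d}) and (\ref{cond2_infogap_mult_1d}), and the paper also obtains (Lemma~\ref{xn_lemma}) a uniform cutoff bound $z^n\in(-\gamma,\gamma)$ along the whole iteration. The paper then does \emph{not} prove that $T$ is a contraction; instead it shows the particular sequence $A^n_t$ has a uniformly bounded derivative (this is where $g(\delta-\gamma)<1$ is used, via the recursion for $dA^{n+1}/d\theta$), invokes Arzel\`a--Ascoli to extract a convergent subsequence, and uses continuity of the best-response operator to pass to the limit.

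Your contraction claim does not go through as written. First, the best response $TA(\theta)=1-G(z^*(A)-A(\theta)+\theta)$ depends on $A$ both through the pointwise value $A(\theta)$ and through the global cutoff $z^*(A)$; you only bound the first sensitivity and say nothing about how $z^*$ moves with $A$, so you have not bounded $\lvert TA_1-TA_2\rvert$ by a constant $<1$ times $\lVert A_1-A_2\rVert$. Second, your statement that the ``transition zone'' has width $\delta-\gamma$ is inconsistent with your own definition (your zone has width $2\gamma$), and in any case the relevant bound is not on the width of a zone but on the argument of $g$: one needs $\lvert z^*-A(\theta)+\theta\rvert\ge\lvert\delta-\gamma\rvert$ everywhere, which follows only from the sharp-jump invariant class together with the cutoff control $z^n\in(-\gamma,\gamma)$, neither of which you establish. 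Condition~(\ref{cond3_infogap_mult_1d}) alone does not prevent $g(0)$ from exceeding $1$, so the map is not a global contraction; the paper sidesteps this by working only with the specific iterates from $A^0_t$ and appealing to compactness rather than contraction.
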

\begin{proof}
See Appendix \ref{proof_infogap_mult_1d}.
\end{proof}

The proof strategy is to start for fixed $t$ with a guess of the equilibrium aggregate attack, and then compute the best response to that attack. This yields another guess for the aggregate attack, to which we compute the best response.... . We use this argument to construct a converging sequence of aggregate attack guesses. The limit aggregate attack is an equilibrium, which is different for different choices of $t$.

\subsection{Normally Distributed Errors}\label{Normal}
To illustrate the origins of equilibrium multiplicity, let us assume that agents $i\in{[0,1]}$ receive private information:
\begin{eqnarray}z_i=A-\theta+\sigma_z\xi_i, \quad \xi_i\sim\mathcal{N}(0,1).\label{i1}\end{eqnarray}
The signal $z_i$ therefore informs player $i$ with
precision $\alpha_z:=\frac{1}{\sigma_z}$ of the attack's net size
$A-\theta$, and we have
\begin{prop}
If private information is precise, $\alpha_z>\sqrt{2\pi}$, agents
can coordinate on multiple equilibria.\end{prop}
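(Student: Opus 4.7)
My plan is to obtain the result as a direct corollary of Proposition~\ref{infogap_mult_1d}, by verifying its three conditions for the Gaussian specification in \eqref{i1}. With $\rho_i = \sigma_z\xi_i$ and $\alpha_z = 1/\sigma_z$, the cdf and pdf of the noise are $G(x) = \Phi(\alpha_z x)$ and $g(x) = \alpha_z\phi(\alpha_z x)$, so the task reduces to exhibiting constants $\delta>0$, $\gamma>0$ such that~\eqref{cond1_infogap_mult_1d}--\eqref{cond3_infogap_mult_1d} all hold after this substitution.

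The pivotal step is Condition~\eqref{cond3_infogap_mult_1d}, $\alpha_z\phi(\alpha_z(\delta-\gamma))<1$. Since $\phi$ attains its maximum $1/\sqrt{2\pi}$ at the origin, the density $g$ has peak $\alpha_z/\sqrt{2\pi}$, which exceeds $1$ precisely when the hypothesis $\alpha_z>\sqrt{2\pi}$ holds. Hence the condition rules out $\delta=\gamma$ and forces $|\delta-\gamma|$ to lie strictly above the threshold $\tau(\alpha_z) := \sqrt{2\ln(\alpha_z/\sqrt{2\pi})}/\alpha_z$, a positive quantity that vanishes as $\alpha_z\downarrow\sqrt{2\pi}$. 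I would pick $\delta-\gamma$ just above $\tau(\alpha_z)$ so that the room to also keep $\delta$ and $\gamma$ small remains.

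With this gap fixed, the remaining two conditions need to be matched. Condition~\eqref{cond2_infogap_mult_1d}, $\Phi(\alpha_z\gamma)\geq 1-\delta$, is a lower bound on $\alpha_z\gamma$, which can be met by scaling $\gamma$ up while keeping $\delta-\gamma>\tau(\alpha_z)$ intact, i.e., by simultaneously enlarging $\delta$. Condition~\eqref{cond1_infogap_mult_1d} then asks the likelihood ratio $\bigl[1-\Phi(\alpha_z(\xi-\alpha))\bigr]/\Phi(\alpha_z(\xi-\beta))$ to dominate $(1-c)/c$ uniformly over $\alpha\in[0,\delta]$, $\beta\in[1-\delta,1]$, and the prescribed range of $\xi$. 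By the monotonicity of $\Phi$ I would reduce to the worst-case configuration at the boundary $\xi = 1-\delta-\gamma$ with $\alpha=\delta$ and $\beta=1-\delta$, and then use a Mill's-ratio estimate for $1-\Phi(\cdot)$; the large separation $\beta-\alpha\geq 1-2\delta$ between the two $\Phi$ arguments, combined with a precision $\alpha_z$ bounded below by $\sqrt{2\pi}$, should give a uniform lower bound on the ratio.

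The main obstacle is the interplay among the three conditions: \eqref{cond3_infogap_mult_1d} demands $\delta-\gamma$ be bounded below by $\tau(\alpha_z)$, while \eqref{cond1_infogap_mult_1d} and \eqref{cond2_infogap_mult_1d} prefer $\delta$ and $\gamma$ to be small, so the signal is informative enough to sharply separate the ``attack'' and ``no-attack'' conjectures. The hypothesis $\alpha_z>\sqrt{2\pi}$ is exactly the quantitative statement under which this trade-off can be resolved. I expect the bookkeeping in the Mill's-ratio estimate for \eqref{cond1_infogap_mult_1d}, made uniform in $\xi$ while keeping $\delta,\gamma$ compatible with the other two inequalities, to be the least transparent step of the verification.
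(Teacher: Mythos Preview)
Your plan does not match the paper's argument and, more importantly, cannot deliver the stated threshold. The paper does \emph{not} derive this proposition from Proposition~\ref{infogap_mult_1d}; it argues directly. Agents attack iff $z_i>z^*$, so the mass of attackers solves the fixed-point equation $A=1-\Phi\bigl(\alpha_z(z^*-A+\theta)\bigr)$. The right-hand side is an S-curve in $A$ with maximal slope $\alpha_z\phi(0)=\alpha_z/\sqrt{2\pi}$; hence it has three fixed points (Lemma~\ref{lem1}) on an interval of $\theta$ precisely when $\alpha_z>\sqrt{2\pi}$. The paper then pastes together the outer branches to obtain a one-parameter family $A_t(\theta,z^*)$ and closes the argument by showing, via a limiting Bayes computation, that the payoff-indifference condition~\eqref{l0} admits a solution $z^*$ for each $t$. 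The threshold $\sqrt{2\pi}$ is thus exactly the slope condition for multiple fixed points of the best-response map---it is not extracted from the inequalities \eqref{cond1_infogap_mult_1d}--\eqref{cond3_infogap_mult_1d}.

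Your reading of Condition~\eqref{cond3_infogap_mult_1d} is inverted. When $\alpha_z\le\sqrt{2\pi}$ one has $g(x)=\alpha_z\phi(\alpha_z x)\le\alpha_z/\sqrt{2\pi}\le 1$ for every $x$, so \eqref{cond3_infogap_mult_1d} is \emph{free}; the hypothesis $\alpha_z>\sqrt{2\pi}$ makes that condition \emph{harder}, not ``pivotal'' for the conclusion. More decisively, the sufficient conditions of Proposition~\ref{infogap_mult_1d} are strictly looser than the sharp threshold: for $\alpha_z$ just above $\sqrt{2\pi}$ you cannot satisfy \eqref{cond2_infogap_mult_1d} together with the implicit requirement $\delta+\gamma<\tfrac12$ (needed so that the interval $[\delta+\gamma,1-\delta-\gamma]$ of admissible $t$ in the iteration is nonempty). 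Indeed, on the boundary $\delta=\tfrac12-\gamma$ the inequality $\Phi(\alpha_z\gamma)\ge 1-\delta$ becomes $\Phi(\alpha_z\gamma)\ge \tfrac12+\gamma$; both sides equal $\tfrac12$ at $\gamma=0$, the left side has slope $\alpha_z/\sqrt{2\pi}\approx 1$ there and is concave on $\gamma>0$, so it stays strictly below the line $\tfrac12+\gamma$. Hence no admissible $(\delta,\gamma)$ exists, and your route via Proposition~\ref{infogap_mult_1d} would only yield multiplicity for some $\alpha_z$ strictly larger than $\sqrt{2\pi}$, not for all $\alpha_z>\sqrt{2\pi}$ as claimed.
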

\begin{proof}
We proceed in three steps. First, we compute the threshold signal
$z^*$, for which agents are indifferent between attacking and not
attacking. Given this threshold, we compute the mass of attacking
agents. Third, we show that there exist multiple equilibria.

1.) Payoff indifference condition (PIC): Given a signal $z_i$,
agents $i$ choose an action $a_i\in\{0,1\}$, to maximize expected
utility:
\begin{eqnarray} E[U_i]=a_i(P(A-\theta>0|z_i)-c). \end{eqnarray}
Agent $i$ is therefore just indifferent between attacking,
$a_i=1$, and not attacking, $a_i=0$, when he receives a signal
$z_i=z^*$ such that:
\begin{eqnarray}  P(A-\theta>0|z^*)=c. \label{l0.1} \end{eqnarray}

It follows from (\ref{l0.1}) that agents attack if $z>z^*$, and
they will abstain from attacking whenever $z\leq z^*$.

2.) Given the critical signal $z^*$, we can compute the mass of
attacking agents:
\begin{eqnarray} A=P(z>z^*|A,\theta)  \label{A1}\end{eqnarray}
For normally distributed signal errors, (\ref{A1}) can be
rewritten as:
\begin{eqnarray} A=1-\Phi(\alpha_z(z^*-A+\theta)), \label{A2}  \end{eqnarray}
where $\Phi()$ is the cumulative normal distribution. From
(\ref{A2}), we have 
\begin{lemma} \label{lem1} If $\alpha_z>\frac{1}{\sqrt{2\pi}}$ then, for every level $z^*$, there exists an
interval $[\check{\theta}(z^*),\hat{\theta}(z^*)]$ such that
(\ref{A2}) has three solutions $A_j(\theta,z^*), j=1,2,3$ whenever
$\theta\in[\check{\theta}(z^*),\hat{\theta}(z^*)]$.
\end{lemma}

To construct equilibrium functions $A_t(\theta,z^*)$, we use the
solutions $A_j(\theta;z^*)$ from Lemma \ref{lem1}. More
specifically, we focus on the solutions $j=1,3$ such that we
obtain functions $A_t(\theta;z^*)$, which are downward sloping
$\frac{\partial A_t}{\partial \theta}<0$. Given these functions,
it remains to show that there exist values $z^*$ that satisfy the
payoff indifference condition
\begin{eqnarray}  P(A_t(\theta,z^*)-\theta>0|z^*)=c. \label{l0}
\end{eqnarray} In Appendix \ref{A3} we show that there exist small values $\check{z}$ such that
$P(A_t(\theta,\check{z})-\theta>0|\check{z})>c$ and large values
$\hat{z}$ such that $P(A_t(\theta,\hat{z})-\theta>0|\hat{z})<c$.

Equilibrium values $z^*$ are thus ensured: these values $z^*$ can
either obtain as solutions to (\ref{l0}), or the critical $z^*$
values are those values where the function
$P(A_t(\theta,z^*)-\theta>0|z^*)$ is discontinuous in $z^*$. In
that case we have a $z^*$, such that for small $\delta>0$,
$P(A_t(\theta,z^*+\delta)-\theta>0|z^*+\delta)>c$ and
$P(A_t(\theta,z^*-\delta)-\theta>0|z^*-\delta)<c$. That is, the
expected value of attacking/not attacking changes discontinuously
at $z^*$, and $z^*$ is the agents' equilibrium cutoff value.
\end{proof}

The current example, in particular equation (\ref{A2}), illustrates how runs feed on themselves: the size of the attack $A$ is an increasing function $\pi(A)$ of the attack itself. Put differently, increases in the mass of attacking agents increase the mass of agents who receive high signal realizations, and thus induce more agents to join the run. This increase in the number of attacking agents is once again visible, and induces even more agents to run... . Hence, if the private signal's precision is sufficiently high, runs feed on themselves, and agents can coordinate on multiple equilibria. Put differently, when bystanders see that the riot police is outnumbered, they are tempted to join in on the protest, which, in turn, attracts an even bigger turn-out... .





\section{Discussion}\label{Discussion}

Suppose you are passing by your local bank branch, and you see people lining up to withdraw money. Suppose also, that you have some knowledge of the bank's balance sheet and it's cash reserves "$\theta$". How would you choose your strategy? 

Following the global games logic, you know that your information over $\theta$ is correlated with the other agents' information over $\theta$. Accordingly, you can use your signal over the bank's balance sheet to infer the other agents' information over the bank's balance sheet. In turn, since all agents condition their actions on their information over the bank's finances, you can compute a posterior distribution over the other players' actions.  

Alternatively, you can use your information over the other players' actions: If you see people lining up in front of your local bank, a run is underway, and you should withdraw your money while you can. On the contrary, if nobody lines up to withdraw, then there is probably no run, and you do not have to act.

The current framework takes the middle ground between these two extremes: agents observe each others' actions, and they do think about the fundamental. The model is thus flexible enough to accommodate a range of environments that vary regarding the information that agents have over each others' actions and over fundamentals. 

In the context of bank-runs, currency crises, or riots and revolutions, agents will arguably monitor each others' actions closely, and the current model predicts that precise private information brings back multiplicity. Another application concerns models of aggregate economic activity,\footnote{In \citet{Dia82} type models, firms face a binary production choice, and the probability of making a sale increases with the level of overall economic activity $A$ and monetary policy $\theta$.} where firms' production choices depend on aggregate economic activity and exogenous fundamentals.

\newpage
\appendix


\section{Proof of Proposition \ref{infogap_mult_general}}\label{proof_infogap_mult_general}
To prove Proposition \ref{infogap_mult_general} we begin with a lemma, which collects a number of useful properties. In Section \ref{Iteration} we use these properties, together with the theorem of Arzelà–Ascoli, to show that there exists a continuum of equilibria $A_t(\theta)$. Regarding notation $A^n_t(t^-)$ and $A^n_t(t^+)$  denote left and right limits of the function $A_t(\theta)$ depicted in Diagram \ref{Diagram3}.

\begin{lemma} \label{helper_lemma_2d}
Suppose that for a given $t$ $A(\theta) \in [1-\delta,1]$ if $\theta < t$, and $A(\theta) \in [0,\delta]$ otherwise. Then the following holds:
\begin{enumerate}
\item If $y_i \geq 1-\delta-\gamma$ and $x_i\leq t+ \xi$, then $P[\theta<t|A(\cdot),(x_i,y_i)] \geq c$.
\item If $y_i \leq \delta+\gamma$ and $x\geq t- \xi$, then $P[\theta\geq t|A(\cdot),(x_i,y_i)] \geq c$.
\item $P[y_i\geq 1-\delta-\gamma \cap x\leq t+ \xi|A(\cdot),\theta < t]\geq 1-\delta$
\item $P[y_i\leq \delta+\gamma \cap x> t- \xi|A(\cdot),\theta \geq t]\geq 1-\delta$
\end{enumerate}
\end{lemma}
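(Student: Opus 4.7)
The plan is to handle items 1--2 (posterior bounds on $\theta$) and items 3--4 (unconditional sampling bounds on $(x_i,y_i)$) separately, using hypothesis (\ref{cond1_2d}) for the first pair and hypothesis (\ref{cond2_2d}) for the second pair.

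For items 1 and 2, I would begin from Bayes' rule with the agents' improper uniform prior on $\theta$. Combined with the independence $f=f_xf_y$, the posterior density of $\theta$ given $(x_i,y_i)$ is proportional to $f_x(x_i-\theta)\,f_y(y_i-A(\theta))$. To bound the ratio $P[\theta\geq t\mid\cdot]/P[\theta<t\mid\cdot]$ I would split the integral at $t$ and exploit the range hypothesis on $A$: on $\{\theta<t\}$ the value $A(\theta)\in[1-\delta,1]$ gives $f_y(y_i-A(\theta))\geq \inf_{a\in[1-\delta,1]}f_y(y_i-a)$, and on $\{\theta\geq t\}$ the value $A(\theta)\in[0,\delta]$ gives $f_y(y_i-A(\theta))\leq \sup_{a\in[0,\delta]}f_y(y_i-a)$. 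The $\theta$-integrals of $f_x(x_i-\theta)$ collapse to $F_x(x_i-t)$ and $1-F_x(x_i-t)$ after the change of variables $u=x_i-\theta$, and under $x_i\leq t+\xi$ the resulting ratio is bounded by $F_x(\xi)/(1-F_x(\xi))$. Applying (\ref{cond1_2d}) at $\eta=y_i\geq 1-\delta-\gamma$ then gives $(1-c)/c$, equivalent to item 1. Item 2 is parallel, but the symmetry of $f_y$ has to be invoked: the involution $\tilde a=1-a$ together with $f_y(y_i-a)=f_y(a-y_i)$ rewrites the required $\sup/\inf$ ratio as the one appearing in (\ref{cond1_2d}) evaluated at $\tilde\eta=1-y_i$, which satisfies $\tilde\eta\geq 1-\delta-\gamma$ precisely when $y_i\leq \delta+\gamma$; the $x$-side is again handled via the symmetry of $f_x$, which turns $x_i\geq t-\xi$ into the bound $1-F_x(x_i-t)\geq 1-F_x(\xi)$ in the denominator.

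For items 3 and 4, I would argue directly from the sampling distribution of $(x_i,y_i)$ conditional on a fixed $\theta$, using independence. Fix $\theta<t$, so that $A(\theta)\in[1-\delta,1]$. The event $\{y_i\geq 1-\delta-\gamma\}$ translates to $\{\epsilon^y_i\geq 1-\delta-\gamma-A(\theta)\}$, and since $A(\theta)\geq 1-\delta$ the threshold is at most $-\gamma$, so the probability is at least $1-F_y(-\gamma)=F_y(\gamma)$ by symmetry. Independently, $\{x_i\leq t+\xi\}$ translates to $\{\epsilon^x_i\leq t+\xi-\theta\}$ with $t+\xi-\theta>\xi$, so the probability is at least $F_x(\xi)$. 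Multiplying and invoking (\ref{cond2_2d}) yields $F_x(\xi)F_y(\gamma)\geq 1-\delta$, which is item 3. Item 4 follows by the mirror argument on the other half of the support, again using the symmetry of both densities.

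The computations are essentially routine, so the only delicate point is the symmetry reduction in item 2: hypothesis (\ref{cond1_2d}) is asserted only for $\eta\geq 1-\delta-\gamma$, whereas the natural parameter in item 2 is $y_i\leq \delta+\gamma$. Mapping one to the other cleanly via the involutions $a\mapsto 1-a$ and $y_i\mapsto 1-y_i$ is the one place the symmetry of $f_y$ is actually used, so that is the step I would double-check to make sure the $\sup$ and $\inf$ swap in the right direction and land exactly in the admissible range of (\ref{cond1_2d}).
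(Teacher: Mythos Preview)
Your proposal is correct and follows essentially the same route as the paper: the same split of the posterior at $t$, the same $\sup/\inf$ bound on the $f_y$ factor, the same reduction of the $f_x$ integrals to the ratio $F_x(\xi)/(1-F_x(\xi))$, the same symmetry involution $y_i\mapsto 1-y_i$ for item~2, and the same worst-case argument for items~3--4. The only cosmetic difference is that the paper makes the improper prior rigorous by writing the posterior as a limit over compact uniforms $\mathcal{U}^t_N$ before passing to the ratio, whereas you work directly with the improper posterior density.
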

\begin{proof}
\begin{enumerate}
\item To prove the first statement, we define
$$\kappa_N := \frac{\int_{t}^{t+N} f_x(x_i-\theta)f_y(y_i-A(\theta))\, d\theta}{\int_{t-N}^t f_x(x_i-\theta)f_y(y_i-A(\theta))\, d\theta}\, ,$$
and show that the first two conditions can be reduced to statements about $\kappa_N$. We denote by $\mathcal{U}^t_N$ the uniform distribution of $\theta$ over the interval $[t-N,t+N]$.
\begin{multline*}
P[\theta<t|A(\cdot),(x_i,y_i)] =\lim_{N\rightarrow\infty} \frac{P[\theta<t \wedge (x_i,y_i)|A(\cdot),\mathcal{U}^t_N]}{P[(x_i,y_i)|A(\cdot),\mathcal{U}^t_N]} \\
= \lim_{N\rightarrow\infty} \frac{\int_{t-N}^t \frac1{2N}f_x(x_i-\theta)F_y(y_i-A(\theta)) \, d\theta}{\int_{t-N}^t \frac1{2N}f_x(x_i-\theta)f_y(y_i-A(\theta)) \, d\theta+\int_t^{t+N} \frac1{2N}f_x(x_i-\theta)f_y(y_i-A(\theta)) \, d\theta} \\
= \lim_{N\rightarrow\infty} \frac{1}{1+\kappa_N} \geq c \Leftrightarrow \lim_{N\rightarrow\infty} \kappa_N \leq \frac{1-c}{c}
\end{multline*}
Using appropriate variable transformations ($\tau = \theta - t$ in the nominator and $\tau = t-\theta$ in the denominator) and symmetry of $f_x$ we get
$$\kappa_N = \frac{\int_{0}^{N} f_x(x_i-(t+\tau))f_y(y_i-A(t+\tau))\, d\tau}{\int_{0}^{N} f_x(x_i-(t-\tau))f_y(y_i-A(t-\tau))\, d\tau} \, .$$
Recall, that $y_i \geq 1-\delta-\gamma$ and $x_i\leq t+ \xi$ holds in this case.
\begin{multline*}
\kappa_N \leq \frac{\int_{0}^{N} f_x(x_i-(t+\tau))\, d\tau}{\int_{0}^{N} f_x(x_i-(t-\tau))\, d\tau} \, \frac{\sup_{a\in[0,\delta]} f_y(y_i-a)}{\inf_{a\in[1-\delta,1]} f_y(y_i-a)} \\
= \frac{F_x(x_i-t)-F_x(x_i-t-N)}{F_x(x_i-t+N)-F_x(x_i-t)} \, \frac{\sup_{a\in[0,\delta]} f_y(y_i-a)}{\inf_{a\in[1-\delta,1]} f_y(y_i-a)} \\
\underset{{N\rightarrow \infty}}\longrightarrow \frac{F_x(x_i-t)}{1-F_x(x_i-t)} \, \frac{\sup_{a\in[0,\delta]} f_y(y_i-a)}{\inf_{a\in[1-\delta,1]} f_y(y_i-a)} \\
\leq \frac{F_x(\xi)}{1-F_x(\xi)} \, \frac{\sup_{a\in[0,\delta]} f_y(y_i-a)}{\inf_{a\in[1-\delta,1]} f_y(y_i-a)}\leq \frac{1-c}{c}
\end{multline*}
The last but one inequality uses condition \eqref{cond1_2d}.
\item The proof of the second statement relies on the same arguments used in 1. First, observe that
$$P[\theta\geq t|A(\cdot),(x_i,y_i)] = \lim_{N\rightarrow\infty} \frac{1}{1+\frac1{\kappa_N}} \geq c \Leftrightarrow \lim_{N\rightarrow\infty} \frac1{\kappa_N} \leq \frac{1-c}{c}\, .$$
Note that $y_i \leq \delta+\gamma$ and $x\geq t- \xi$ holds. Again, we can obtain a statement: 
\begin{multline*}
\frac1{\kappa_N} \leq \frac{1-F_x(-\xi)}{F_x(-\xi)}\, \frac{\sup_{a\in[1-\delta,1]} f_y(y_i-a)}{\inf_{a\in[0,\delta]} f_y(y_i-a)} \\
= \frac{F_x(\xi)}{1-F_x(\xi)}\, \frac{\sup_{a\in[0,\delta]} f_y((1-y_i)-a)}{\inf_{a\in[1-\delta,1]} f_y((1-y_i)-a)} \leq \frac{1-c}{c} \, .
\end{multline*}
The equality uses the symmetry of $F_x$ and $f_y$. The last inequality exploits condition \eqref{cond1_2d} (note that $(1-y_i) > 1 - \delta - \gamma$).
\item The third statement follows from condition \eqref{cond2_2d}. Regarding notation, we use $A^n_t(t^-)$ to denote left limits of the function $A_t$ depicted in Diagram \ref{Diagram3}:
\begin{multline}
P[y_i\geq 1-\delta-\gamma \wedge x\leq t+ \xi|A(\cdot),\theta < t] 
\\
\geq P[y_i\geq 1-\delta-\gamma \wedge x\leq t+ \xi|A(\theta)=1-\delta,\theta = t^{-}] \\
\geq F_x[t+\xi-t](1-F_y[1-\delta-\gamma-(1-\delta)]) = F_x[\xi]F_y[\gamma]
\geq 1-\delta
\end{multline} 
\item The fourth statement follows from condition \eqref{cond2_2d} and the following inequalities:
\begin{multline}
P[y_i\leq \delta+\gamma \wedge x> t- \xi|A(\cdot),\theta \geq t] \\
\geq P[y_i\leq \delta+\gamma \wedge x> t- \xi|A(\theta)=\delta,\theta = t] \\
\geq F_y[\gamma](1-F_x[-\xi]) = F_x[\xi]F_y[\gamma]
\geq 1-\delta
\end{multline}
\end{enumerate}
\end{proof}

\subsection{Iteration}\label{Iteration}

We now use an iteration to prove the existence of an equilibrium for a given\footnote{The requirement $1-\delta \geq \gamma$, $1>3\delta + 2\gamma$ in Proposition \ref{infogap_mult_general} implies that the interval is non-degenerated.}
\begin{equation}
t\in[\delta+\gamma,1-\delta-\gamma]\, . \label{def_t}
\end{equation}

We start from a hypothetical situation in which player $i$ faces an aggregate attack $A^0_t$ defined by (see green line below)
\begin{equation}\label{Attheta0}
A^0_t(\theta) := \left\{\begin{array}{ll}
1-\delta \qquad\qquad & \theta < t \\
\delta \qquad\qquad & \text{otherwise.}
\end{array} \right.
\end{equation}

We define the set of signals for which a player who would find it optimal to attack given a conjecture about other players' behavior $A^n_t$ by $\Gamma^n$. Using this definition we have
\begin{equation}
A^{n+1}_t(\theta) := \iint_{\Gamma^n} f_x(x-\theta)f_y(y-A_t^n(\theta))\,dx \, dy\,\label{xyz}
\end{equation} 

Given this definition of $A^{n+1}_t$, it follows by induction from the inequalities in Lemma \ref{helper_lemma_2d} that $A^n_t \geq 1-\delta$ ($\leq \delta$) for $\theta\leq t$ (otherwise) for all $n\geq 0$.

To complete the proof of Proposition \ref{infogap_mult_general} we use the theorem of Arzelà–Ascoli to show that the sequence $A_t^n$ has a convergent subsequence. That is, we note that $A_t^n$ are continuous, uniformly bounded (by zero and one), and have a uniformly bounded derivative for all $\theta \neq t$, which implies equicontinuity. Hence, the preconditions of the Arzelà–Ascoli theorem are met\footnote{See, e.g., \citeauthor{shilov2013elementary} (\citeyear{shilov2013elementary}, p. 32).} on each interval $[-k,t]$ (define $A_t^n$ at $t$ by the right limit) and $[t,k+1]$ for $k\in\mathbf{N}$ (they are also met for subsequences). Hence, for $k=1$ there is a convergent subsequence on $[-1,t]$ denoted by $A_t^{n_1}$, from which we can select yet another convergent subsequence on $[t,2]$ (and thus on $[-1,2]$) denoted by $A_t^{n_2}$. We can carry out the same procedure for each $k>1$, and receive a (sub)sequence $A_t^{n_{2k}}$ that converges on $[-k,k+1]$. Last but not least, select the $k$-th element of sequence $A_t^{n_{2k}}$ to create a new sequence $A_t^{n_0}$. This sequence is a subsequence of a converging sequence, and hence it converges.

We denote the limit of this sequence by $A_t$. It constitutes an equilibrium due to the continuity of the best-response operator.


\section{Proof of Proposition \ref{infogap_mult_1d}}\label{proof_infogap_mult_1d}

The proof is an adapted version of the proof of Proposition \ref{infogap_mult_general}. Again, we first state and proof a supporting lemma:

\begin{lemma} \label{helper_lemma_1d}
Suppose that for a given $t$ $A(\theta) \in [1-\delta,1]$ if $\theta < t$, and $A(\theta) \in [0,\delta]$ otherwise. Then the following holds:
\begin{enumerate}
\item If $z_i \geq 1 - \delta - t- \gamma$ then $P[\theta<t|A(\cdot),z_i] \geq c$.
\item If $z_i \leq \delta - t + \gamma$, then $P[\theta\geq t|A(\cdot),z_i] \geq c$.
\item $P[z_i \geq 1 - \delta - t- \gamma|A(\cdot),\theta < t]\geq 1-\delta$
\item $P[z_i \leq \delta - t + \gamma|A(\cdot),\theta \geq t]\geq 1-\delta$
\end{enumerate}
\end{lemma}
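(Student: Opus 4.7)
My strategy is to mirror the proof of Lemma \ref{helper_lemma_2d}, adapted to the single-signal structure $z_i = A(\theta) - \theta + \rho_i$. Statements 3 and 4 are dispatched quickly: conditional on $\theta < t$, the hypothesis $A(\theta) \in [1-\delta, 1]$ gives $A(\theta) - \theta > 1 - \delta - t$, so the event $\{z_i \geq 1-\delta-t-\gamma\}$ is implied by $\{\rho_i \geq -\gamma\}$, whose probability under symmetry of $g$ equals $G(\gamma) \geq 1 - \delta$ by condition (\ref{cond2_infogap_mult_1d}). Statement 4 follows by the mirror inequality $A(\theta) - \theta \leq \delta - t$ for $\theta \geq t$.

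For statement 1, I would write the posterior under the uniform improper prior exactly as in Lemma \ref{helper_lemma_2d}. Setting $\sigma := z_i + t$ and substituting $\tau = \theta - t$ in the numerator and $\tau = t - \theta$ in the denominator, the inequality $P[\theta < t | z_i] \geq c$ reduces to showing
\begin{equation*}
\kappa := \lim_{N\to\infty}\frac{\int_0^N g\bigl(\sigma + \tau - A(t+\tau)\bigr)\,d\tau}{\int_0^N g\bigl(\sigma - \tau - A(t-\tau)\bigr)\,d\tau}\leq \frac{1-c}{c},
\end{equation*}
where by hypothesis $\sigma \geq 1-\delta-\gamma$, while the ranges are $A(t+\tau)\in[0,\delta]$ and $A(t-\tau)\in[1-\delta,1]$.

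Next I would bound $\kappa$ by replacing $A$ pointwise with its worst admissible constant, exploiting the symmetric unimodality of $g$ around zero so that the relevant arguments lie in its decreasing tail. In the numerator, the argument lies in $[\sigma+\tau-\delta,\sigma+\tau]$; the choice $A = \delta$ maximizes the integrand, yielding the upper bound $\int_0^\infty g(\sigma+\tau-\delta)\,d\tau = 1 - G(\sigma - \delta)$. In the denominator, the argument lies in $[\sigma-\tau-1,\sigma-\tau-1+\delta]$, and the choice $A = 1$ pushes it to its most extreme value, giving the pointwise lower bound $g(\sigma-\tau-1)$ whose integral is $G(\sigma-1)$. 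Combining, $\kappa \leq (1 - G(\sigma - \delta))/G(\sigma - 1)$, which is precisely the quantity controlled by condition (\ref{cond1_infogap_mult_1d}) specialized to $\xi = \sigma$, $\alpha = \delta$, $\beta = 1$. Statement 2 follows from the entirely symmetric argument, with the roles of $\theta < t$ and $\theta \geq t$ exchanged.

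The main obstacle is the envelope step: replacing a generic admissible $A(\cdot)$ by its worst-case constant value requires enough monotonicity of $g$ on the intervals swept by its argument as $A$ varies, and the hypothesis $\sigma \geq 1-\delta-\gamma$ must be used to place these arguments in the decreasing (symmetric) tail of $g$. Once this monotonicity regime is secured, everything else is bookkeeping that parallels the two-dimensional argument essentially verbatim, with $g$ and $G$ playing the combined role of the densities and CDFs that were separated into $f_x,f_y,F_x,F_y$ in the proof of Lemma \ref{helper_lemma_2d}.
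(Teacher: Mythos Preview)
Your treatment of statements 3 and 4 is essentially identical to the paper's: bound $A(\theta)-\theta$ using the hypothesis, reduce to $P[\rho_i\geq -\gamma]=G(\gamma)\geq 1-\delta$ via symmetry of $g$ and condition (\ref{cond2_infogap_mult_1d}).

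For statements 1 and 2 your reduction to the ratio $\kappa$ and the substitution $\sigma=z_i+t$ are exactly the paper's. The difference is in how the variable $A(\cdot)$ is eliminated. You try to identify the \emph{worst-case} constant ($\alpha=\delta$ in the numerator, $\beta=1$ in the denominator) by a pointwise envelope bound; as you correctly flag, this requires $g$ to be monotone on the intervals $[\sigma+\tau-\delta,\sigma+\tau]$ and $[\sigma-\tau-1,\sigma-\tau-1+\delta]$ for all $\tau\geq 0$, and the stated hypotheses (symmetry of $g$, conditions (\ref{cond1_infogap_mult_1d})--(\ref{cond3_infogap_mult_1d})) do not guarantee this. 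So the obstacle you name is real, not cosmetic.

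The paper sidesteps it entirely. Rather than pinning down the worst case, it observes that $\alpha\mapsto\int_t^{t+N}g(z_i-\alpha+\theta)\,d\theta$ is continuous, and invokes the intermediate value theorem to produce \emph{some} $\alpha_N\in[0,\delta]$ (and $\beta_N\in[1-\delta,1]$) for which the integrals with $A(\cdot)$ replaced by the constant agree with the original ones. After integrating and passing to the limit this yields $\kappa=\dfrac{1-G(\sigma-\alpha)}{G(\sigma-\beta)}$ for \emph{some} $\alpha\in[0,\delta]$, $\beta\in[1-\delta,1]$. Now the point of the uniform quantification in condition (\ref{cond1_infogap_mult_1d}) becomes clear: since the bound is assumed for \emph{all} such $\alpha,\beta$, one need not know which values the IVT delivers. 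Your route, had it gone through, would use only the single instance $(\alpha,\beta)=(\delta,1)$ of (\ref{cond1_infogap_mult_1d}); the paper's route explains why the hypothesis is stated uniformly and trades that extra strength for not needing any monotonicity of $g$.
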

\begin{proof}
\begin{enumerate}
\item To prove the first inequality in 1. we define
$$\kappa_N := \frac{\int_{t}^{t+N} g(z_i-A(\theta)+\theta) \, d\theta}{\int_{t-N}^{t} g(z_i-A(\theta)+\theta) \, d\theta}\, ,$$
and show that both inequalities in 1. can be reduced to statements about $\kappa_N$.
\begin{multline*}
P[\theta<t|A(\cdot),z_i] =\lim_{N\rightarrow\infty} \frac{P[\theta<t \wedge z_i|A(\cdot),\mathcal{U}^t_N]}{P[z_i|A(\cdot),\mathcal{U}^t_N]} \\
= \lim_{N\rightarrow\infty} \frac{\int_{t}^{t+N} g(z_i-A(\theta)+\theta) \, d\theta}{\int_{t}^{t+N} g(z_i-A(\theta)+\theta) \, d\theta+ \int_{t}^{t+N} g(z_i-A(\theta)+\theta) \, d\theta} \\
= \lim_{N\rightarrow\infty} \frac{1}{1+\kappa_N} \geq c \Leftrightarrow \lim_{N\rightarrow\infty} \kappa_N \leq \frac{1-c}{c}
\end{multline*}
Due to continuity of $g$ there exist $\alpha_N\in[0,\delta]$ and $\beta_N\in[1-\delta,1]$ such that
\begin{multline*}\kappa_N = \frac{\int_{t}^{t+N} g(z_i-\alpha_N+\theta) \, d\theta}{\int_{t-N}^{t} g(z_i-\beta_N+\theta) \, d\theta} = \frac{G(z-\alpha_N+\theta+N)-G(z-\alpha_N+\theta)}{G(z-\beta_N+\theta)-G(z-\beta_N+\theta-N)} \\
\underset{{N\rightarrow \infty}}\longrightarrow \frac{1-G(z-\alpha+\theta)}{G(z-\beta+\theta)} \leq \frac{1-c}{c} \, ,
\end{multline*}
where $\alpha\in[0,\delta]$ and $\beta\in[1-\delta,1]$ are the limits of the respective sequence. 
The last inequality uses condition \eqref{cond1_infogap_mult_1d} and implies that $P[\theta<t|A(\cdot),z_i] \geq c$.
\item The proof of property 2. works along the lines of step 1., or alternatively like step 2. in the proof of Lemma \ref{helper_lemma_2d}.
\item To prove the third property, we use the notation $A^n_t(t^-)$ to denote left limits of the function $A_t$ depicted in Diagram \ref{Diagram3}:
\begin{multline*}
P[z_i \geq 1 - \delta - t - \gamma|A(\cdot),\theta < t] \geq  P[z_i \geq 1 - \delta - t - \gamma|A(\theta) = 1- \delta,\theta = t^-] \\
= P[\rho_i\geq - \gamma] = 1 - G(-\gamma) = G(\gamma) \geq 1 - \delta
\end{multline*}
The last but one equality holds due $G$ being symmetric, the last equality due to condition \eqref{cond2_infogap_mult_1d}.
\item Part 4. follows from:
\begin{multline*}
P[z_i \leq \delta - t + \gamma|A(\cdot),\theta \geq t] \geq  P[z_i \leq \delta - t + \gamma|A(\theta) = \delta,\theta =t] \\
= P[\rho_i\leq \gamma] = G(\gamma) \geq 1 - \delta
\end{multline*}
Again, the last equality is due to condition \eqref{cond2_infogap_mult_1d}.
\end{enumerate}
\end{proof}

The remaining part of the proof of Proposition \ref{infogap_mult_general} applies directly with two exceptions:
\begin{enumerate}
\item For given $A^n$ there exists a cutoff $z_n$ such that a player attacks iff $z_i<z_n$. This allows to adjust the definition of $A^{n+1}_t$ given $A^n_t$;. 
$$A^{n+1}_t(\theta) := \int_{-\infty}^{z_n} g (z- A^n_t(\theta)+\theta)) \, dz \, .$$
\item To establish equicontinuity of $A^n_t$, we start with another supporting Lemma:
\begin{lemma}\label{xn_lemma}
	Suppose agents consider the aggregate attack (function) to be equal to $A^n_t$ and that $A^n_t(\theta) > t + \delta $ for $\theta<t$, and $A^n_t(\theta) < t-\delta$ otherwise. Then, the agent's attack cutoff is $z^n\in(-\gamma,\gamma)$.
\end{lemma}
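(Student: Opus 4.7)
The plan is to characterize $z^n$ as the unique root of a Bayesian indifference equation and then squeeze it into the open interval $(-\gamma,\gamma)$ by evaluating the posterior likelihood ratio at the two candidate endpoints $z=\pm\gamma$.

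First, under the hypothesis of the lemma, the event $\{A^n_t(\theta)>\theta\}$ (the conjectured attack succeeds) coincides with $\{\theta<t\}$: for $\theta<t$, $A^n_t(\theta)>t+\delta>\theta$, and for $\theta\ge t$, $A^n_t(\theta)<t-\delta\le\theta$. Hence the indifference signal $z^n$ solves $P(\theta<t\mid z^n)=c$. Repeating the improper-prior computation from the proof of Lemma \ref{helper_lemma_1d} (the $\mathcal{U}^t_N$ construction with $N\to\infty$), this equation is equivalent to $\kappa(z^n)=(1-c)/c$, where
$$\kappa(z):=\lim_{N\to\infty}\frac{\int_t^{t+N}g(z-A^n_t(\theta)+\theta)\,d\theta}{\int_{t-N}^t g(z-A^n_t(\theta)+\theta)\,d\theta}.$$

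Next, I would prove the two strict inequalities $\kappa(\gamma)<(1-c)/c<\kappa(-\gamma)$, which, together with the continuity and monotonicity of $\kappa$ in $z$, force $z^n\in(-\gamma,\gamma)$. At $z=\gamma$, the denominator integrates $g$ at arguments $\gamma-(A^n_t(\theta)-\theta)$ with $A^n_t(\theta)-\theta>\delta$, while the numerator does so with $A^n_t(\theta)-\theta<-\delta$; by the symmetry of $g$, the numerator's integrand is rigidly shifted outward (in absolute value) by at least $2\gamma$ relative to the denominator's, so on every finite horizon the denominator concentrates more mass of $g$ than the numerator. Combining this with the tail-mass bound $G(\gamma)\ge 1-\delta$ from \eqref{cond2_infogap_mult_1d} delivers the desired inequality; the reflection $z\mapsto -z$ together with the symmetry of $g$ then yields the analogous bound at $z=-\gamma$.

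The main obstacle will be making the quantitative comparison at $z=\pm\gamma$ watertight. Condition \eqref{cond1_infogap_mult_1d} is calibrated for the range $A\in[0,\delta]\cup[1-\delta,1]$ of Lemma \ref{helper_lemma_1d} and does not transfer verbatim to the present hypothesis, which only enforces the gap $|A^n_t(\theta)-\theta|>\delta$ (and in fact widens it away from $\theta=t$). The bound must therefore be derived directly from this gap together with \eqref{cond2_infogap_mult_1d}, possibly aided by the peak-density estimate \eqref{cond3_infogap_mult_1d} to prevent pathological concentration of $g$ near zero. A secondary issue is the monotonicity of $\kappa$ in $z$, which either follows from log-concavity of $g$ (standard MLRP) or from a direct shift-comparison argument exploiting the disjoint argument ranges of numerator and denominator.
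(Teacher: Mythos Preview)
Your plan is close in spirit to the paper's, but differs in two places, and the differences are exactly where your flagged obstacles sit.

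First, you reduce to the two endpoint evaluations $\kappa(\pm\gamma)$ and then appeal to monotonicity of $\kappa$ in $z$. That monotonicity needs an MLRP/log-concavity hypothesis on $g$ which the paper does not impose. The paper sidesteps this entirely: it shows that \emph{every} agent with $z_i\ge\gamma$ strictly prefers to attack (and symmetrically that every agent with $z_i\le-\gamma$ strictly prefers not to). Once the attack region contains $[\gamma,\infty)$ and is disjoint from $(-\infty,-\gamma]$, any threshold must lie in $(-\gamma,\gamma)$; no monotonicity is required.

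Second, and more importantly, the paper does not try to bound the posterior under $A^n_t$ directly, nor does it invoke condition~\eqref{cond1_infogap_mult_1d}---the very issue you flag as the ``main obstacle.'' Instead it uses a comparison with the initial step function $A^0_t$: the hypothesis $A^n_t(t^-)>t+\delta$ and $A^n_t(t^+)<t-\delta$ means the jump of $A^n_t$ at $t$ is at least as large as that of $A^0_t$, which implies $P[\theta<t\mid z,A^n_t]\ge P[\theta<t\mid z,A^0_t]$ (the footnoted rearrangement inequality in the paper). The right-hand side is explicit and is already known to exceed $P^*$ for $z\ge\gamma$ from the base step of the iteration. So your obstacle is circumvented by comparison rather than confronted by direct estimation.

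A small slip in your sketch: the separation between the argument ranges of numerator and denominator at $z=\gamma$ is $2\delta$ (from $A^n_t-\theta>\delta$ versus $A^n_t-\theta<-\delta$), not $2\gamma$, and the shift is not rigid since $A^n_t(\theta)-\theta$ varies with $\theta$ on each side of $t$.
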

\begin{proof}
	An agent receiving a signal $z_i\geq \gamma$ would attack, where we use the notation $A^n_t(t^-)$ to denote left limits and $A^n_t(t^+)$ to denote right limits:
	\begin{multline*}
	P[\theta < t|z_i=z,A^n_t] =\footnotemark \frac{G(z-A^n_t(t^-)+t)}{G(z-A^n_t(t^-)+t) + 1 - {G(z-A^n_t(t^+)+t)}}\\
	\geq\footnotemark \frac{1-G(z+\delta)}{G(z-\delta) + 1 - G(z+\delta)} 
	= P[\theta < t|z_i=z,A^0_t]> P^*
	\end{multline*}
	Hence,\footnotetext{Use analogous computations as in step 2.} for the cutoff $z^n<\gamma$ has to hold. \footnotetext{The inequality can be shown by rearranging the following inequality: $$\frac{G(z-A^n_t(t^-)+t)}{G(z-\delta)}\geq 1 \geq \frac{1-G(z-A^n_t(t^+)+t)}{1-G(z+\delta)}$$} On the other hand, $z^n>-\gamma$ holds, as an agent receiving a signal $z_i\leq -\gamma$ would not attack:
	\begin{multline*}
	P[\theta\geq t|z_i=z,A^n_t] =  \frac{1 - G(z-A^n_t(t^+)+t)}{G(z-A^n_t(t^-)+t) + 1 - {G(z-A^n_t(t^+)+t)}}\\
	\geq \frac{1-G(z+\delta)}{G(z-\delta) + 1 - G(z+\delta)} = P[\theta\geq t|z_i=z,A^0_t]> 1-P^*
	\end{multline*}
\end{proof}

Now, note that $A_t^n$ are continuous, uniformly bounded and have a uniformly bounded derivative for all $\theta \neq t$:
{\small\begin{align*}\frac{d A_t^{n+1}}{d \theta}(\theta) &= - g(z^n-A_t^n(\theta)+\theta)\left(\frac{d A_t^n}{d \theta}(\theta)-1\right) \\
	&= -g(z^n-A_t^n(\theta)+\theta)\, \frac{1-g(z^n-A_t^n(\theta)+\theta)^n}{1-g(z^n-A_t^n(\theta)+\theta)}\\
	&\geq -\max\left\{g(\gamma-\delta)\, \frac{1-g(\gamma-\delta)^n}{1-g(\gamma-\delta)},g(-\gamma+\delta)\, \frac{1-g(-\gamma+\delta)^n}{1-g(-\gamma+\delta)} \right\}\\
	&> \frac{-g(\gamma-\delta)}{1-g(\gamma-\delta)}
	\end{align*}}The second equality can be shown by induction using $\frac{d A_t^0}{d \theta}(\theta) = 0$. We use the notation to $A_t^n(t^-)$ for the right limit of $A_t^n$ for $\theta\rightarrow t$. The first inequality holds because $g$ is increasing (decreasing) on $\theta < t$ (otherwise), $z^n\in(-\gamma,\gamma)$ (Lemma \ref{xn_lemma}), and $A_t^n(\theta)\geq t+ \delta$ on $\theta < t$ and $A_t^n(\theta) < t - \delta$ otherwise. The second inequality holds due to the symmetry of $g$ and condition \eqref{cond3_infogap_mult_1d}. Also note, that the second equality implies that the derivative of $A_t^n$ is non-positive. Thus, we have established uniform upper and lower bounds.
\end{enumerate}

\section{Conditional Probability}\label{A3}
We have to show that, $\lim_{z^*\rightarrow -\infty}
P(A(\theta,z^*)-\theta>0|z^*)=0$ and $\lim_{z^*\rightarrow \infty}
P(A(\theta,z^*)-\theta>0|z^*)=1$. To do so, we recall that
$z_i=A-\theta+\sigma_z\xi_i$ with $\xi_i\sim\mathcal{N}(0,1)$, and
examine the conditional probability:
\begin{eqnarray}  P(A(\theta,z^*)-\theta>0|z^*)=c. \label{l01} \end{eqnarray}
We begin by defining $y(\theta,z^*):=A(\theta,z^*)-\theta$, and we
recall Bayes's formula:
\begin{eqnarray}  f(y|z^*)=\frac{h(z^*|y)f(y)}{\int_{-\infty}^{\infty} h(z^*|y)f(y)dy}, \label{l02} \end{eqnarray}
Where $h(z^*|y)$ is a normal distribution. Moreover, we note that:
\begin{eqnarray} f(y(\theta))=g(\theta(y))\frac{d \theta}{d y}. \label{l06} \end{eqnarray}
Where $\frac{d\theta}{dy}=\frac{1}{A_{\theta}(\theta,z^*)-1}$.
Recalling (\ref{A2}), $A=1-\Phi(\sqrt{\alpha_z}(z^*-A+\theta))$,
we note that
$A_{\theta}(\theta,z^*)=\frac{-\sqrt{\alpha_z}\phi(\sqrt{\alpha_z}(z^*-A+\theta))}{1-\sqrt{\alpha_z}\phi(\sqrt{\alpha_z}(z^*-A+\theta))}<0$,
for $A_j, j=1,3$. Agents hold a uniform uninformative prior over
$\theta$, such that $g(\theta)$ is a constant $\bar{g}$. Moreover,
we have $\lim_{z^*\rightarrow \infty}A_{\theta}=0$ and thus
$\lim_{z^*\rightarrow \infty}\frac{d\theta}{dy}=-1$ and
$f(y)=-1\bar{g}$. Substituting into (\ref{l02}) yields:
\begin{eqnarray}  \lim_{z^*\rightarrow \infty}f(y|z^*)=\lim_{z^*\rightarrow \infty}\frac{h(z^*|y)}{\int_{-\infty}^{\infty} h(z^*|y)dy}, \label{l02a} \end{eqnarray}
where $h(z^*|y)=\phi(\sqrt{\alpha_z}(z^*-y))$. Finally, we have:
\begin{eqnarray}  \lim_{z^*\rightarrow \infty}P(y>0|z^*)=\lim_{z^*\rightarrow \infty}\frac{\int_{0}^{\infty}h(z^*|y)}{\int_{-\infty}^{\infty} h(z^*|y)dy}=\lim_{z^*\rightarrow \infty}\Phi(\alpha_z z^*)=1. \label{l02b} \end{eqnarray}
The same argument can be made to show that $\lim_{z^*\rightarrow
-\infty} P(y>0|z^*)=0$.

\newpage

\addcontentsline{toc}{section}{References}
\markboth{References}{References}
\bibliographystyle{apalike}
\bibliography{References}

\end{document}